\documentclass[reqno,12pt]{amsart}

\usepackage[margin=1in]{geometry}
\usepackage{amsmath,amssymb,amsthm,mathtools}
\usepackage[dvipsnames]{xcolor}
\usepackage{enumitem}
\usepackage{tikz}
\usepackage{graphicx}
\colorlet{LinkBlue}{MidnightBlue!99!black}
\usepackage[colorlinks=true,linkcolor=LinkBlue,citecolor=LinkBlue,urlcolor=LinkBlue,hypertexnames=false]{hyperref}
\usepackage{aliascnt}
\usepackage{environ}
\usepackage[nameinlink,noabbrev]{cleveref}
\usetikzlibrary{arrows.meta,calc,patterns}
\usepackage{autonum}
\usepackage[framemethod=tikz]{mdframed}

\colorlet{stylecolor}{RoyalBlue}
\colorlet{assigncolor}{Brown}

\makeatletter
\patchcmd{\@maketitle}{\global\topskip42\p@}{\global\topskip14\p@}{}{}
\patchcmd{\@maketitle}{\dimen@34\p@}{\dimen@16\p@}{}{}
\makeatother

\newcommand{\R}{\mathbb{R}}

\newcommand{\Z}{\mathbb{Z}}
\newcommand{\supp}{\operatorname{supp}}
\newcommand{\C}{\mathbb{C}}

\newcommand{\lr}[1]{\left\langle #1 \right\rangle}

\newcommand{\dist}{\mathsf{d}}
\newcommand{\defeq}{\stackrel{\mathrm{def}}{=}}
\newcommand{\1}{\mathbf{1}}

\usepackage[normalem]{ulem}

\NewEnviron{equations}{%
\begin{equation}\begin{gathered}
  \BODY
\end{gathered}\end{equation}
}

\newtheoremstyle{chen}%
  {6pt}{6pt}%
  {\normalfont}%
  {0pt}%
  {\sffamily\bfseries\color{stylecolor}}%
  {}%
  {\newline}%
  {\thmname{#1}\thmnumber{ #2}\thmnote{: #3}}

\newtheoremstyle{chenbrown}%
  {6pt}{6pt}{\normalfont}{0pt}
  {\sffamily\bfseries\color{assigncolor}}{}{\newline}
  {\thmname{#1}\thmnumber{ #2}\thmnote{: #3}}

\theoremstyle{chen}
\newtheorem{theorem}{Theorem}[section]
\newaliascnt{proposition}{theorem}
\newtheorem{proposition}[proposition]{Proposition}
\aliascntresetthe{proposition}
\newaliascnt{lemma}{theorem}
\newtheorem{lemma}[lemma]{Lemma}
\aliascntresetthe{lemma}
\newaliascnt{corollary}{theorem}

\aliascntresetthe{corollary}
\theoremstyle{chenbrown}
\newaliascnt{definition}{theorem}
\newtheorem{definition}[definition]{Definition}
\aliascntresetthe{definition}
\theoremstyle{definition}
\newaliascnt{example}{theorem}

\aliascntresetthe{example}
\theoremstyle{chenbrown}
\newaliascnt{remark}{theorem}

\aliascntresetthe{remark}
\theoremstyle{chenbrown}
\newaliascnt{assumption}{theorem}

\aliascntresetthe{assumption}

\crefname{theorem}{theorem}{theorems}
\Crefname{theorem}{Theorem}{Theorems}
\crefname{proposition}{proposition}{propositions}
\Crefname{proposition}{Proposition}{Propositions}
\crefname{lemma}{lemma}{lemmas}
\Crefname{lemma}{Lemma}{Lemmas}
\crefname{corollary}{corollary}{corollaries}
\Crefname{corollary}{Corollary}{Corollaries}
\crefname{definition}{definition}{definitions}
\Crefname{definition}{Definition}{Definitions}
\crefname{example}{example}{examples}
\Crefname{example}{Example}{Examples}
\crefname{remark}{remark}{remarks}
\Crefname{remark}{Remark}{Remarks}
\crefname{assumption}{assumption}{assumptions}
\Crefname{assumption}{Assumption}{Assumptions}
\crefname{equation}{equation}{equations}
\Crefname{equation}{Equation}{Equations}

\mdfsetup{
  linecolor=stylecolor,
  linewidth=1pt,
  backgroundcolor=stylecolor!15!white,
  innertopmargin=6pt,
  innerbottommargin=6pt,
  skipabove=8pt,
  skipbelow=8pt,
  needspace=7\baselineskip,
}
\surroundwithmdframed{theorem}
\surroundwithmdframed{proposition}
\surroundwithmdframed{lemma}
\surroundwithmdframed{corollary}
\surroundwithmdframed[linecolor=assigncolor,backgroundcolor=assigncolor!15!white]{definition}
\surroundwithmdframed[linecolor=assigncolor,backgroundcolor=assigncolor!15!white]{remark}
\surroundwithmdframed[linecolor=assigncolor,backgroundcolor=assigncolor!15!white]{assumption}

\numberwithin{equation}{section}

\makeatletter
\def\l@subsection{\@tocline{2}{0pt}{3pc}{5pc}{}}
\makeatother

\usetikzlibrary{decorations.markings,intersections}
\graphicspath{{Pictures/}}
\newcommand{\X}{\mathcal{X}}

\newcommand{\Tr}{\operatorname{Tr}}
\newcommand{\tr}{\operatorname{tr}}
\newcommand{\sgn}{\operatorname{sgn}}
\newcommand{\diam}{\operatorname{diam}}
\newcommand{\conv}{\operatorname{conv}}
\newcommand{\bx}{\mathbf{x}}
\newcommand{\by}{\mathbf{y}}
\newcommand{\Gg}{\mathbf{G}}
\newcommand{\Ff}{\mathbf{F}}
\newcommand{\w}{\omega}
\newcommand{\p}{\partial}
\colorlet{cone}{red!70!black}
\colorlet{ctwo}{blue!70!black}
\tikzset{mk/.style={decoration={markings,
  mark=at position 0.1 with {\arrow{Stealth[length=2.2mm]}},
  mark=at position 0.5 with {\arrow{Stealth[length=2.2mm]}},
  mark=at position 0.9 with {\arrow{Stealth[length=2.2mm]}}},postaction={decorate}}}

\title{Geometric bulk conductances in even dimensions}
\author{Alexis Drouot}
\address{Department of Mathematics, University of Washington, Seattle, WA 98195, USA, and 
Centre de Physique Th\'eorique, Aix-Marseille Universit\'e, Marseille, France}
\email{adrouot@uw.edu}

\begin{document}

\begin{abstract}
Let $P$ be a short-range projector on $\ell^2(\mathbb{Z}^{2n},\mathbb{C}^m)$ and $U=(U_1,\dots,U_{2n})$ be a transverse family of open subsets of $\mathbb{R}^{2n}$. We give an elementary proof of a recent result of Martinian--Stoiber, which extends a two-dimensional result of Drouot--Zhu to all even dimensions. Specifically, using only Stokes' theorem and the Poincar\'e lemma, we show that the geometric Kubo formula $\operatorname{Tr}\big(P[P,\mathbf{1}_{U_1}]\wedge\cdots\wedge[P,\mathbf{1}_{U_{2n}}]\big)$ is the product of a topological integer $\mathcal{X}_U$, which depends only on the relative positions of $U_1, \dots, U_{2n}$, and of the bulk conductance of $P$.   Our proof writes the geometric bulk conductance as the pairing of $P$ with a cochain on $\mathbb{Z}^{2n}$ -- an idea originally due to Kitaev -- and deforms the cochain rather than the sets $(U_1, \dots, U_{2n})$. The proof presented in this paper is due to Claude Opus 5.5.
\end{abstract}

\maketitle

\section{Setting and main result}\label{sec:setting}

Let $d = 2n$ and let $P$ be a short-range projector on $\ell^2(\Z^d,\C^m)$: $P^2=P$ and for all $N \geq 0$, there exists $C_N >0$ such that the kernel $P(x,y)$ of $P$ satisfies
\begin{equation}\label{eq:P-decay}
    |P(x,y)| \le C_N\lr{x-y}^{-N}.
\end{equation}
Let $U_1, \dots, U_d$ be subsets of $\R^d$ and let $\1_{U_j}$ denote the multiplication operator on $\ell^2(\Z^d,\C^m)$ by the indicator function of $U_j$. Under a transversality condition on subsets $U_1, \dots, U_d$ of $\R^d$ (see \Cref{def:transverse}), we define the geometric Kubo formula:
\begin{equation}\label{eq:cU}
    c_U(P) \defeq \Tr\big(P\,[P,\1_{U_1}]\wedge\cdots\wedge[P,\1_{U_d}]\big).
\end{equation}

The quantity $c_U(P)$ is a topological marker of insulators in even dimensions \cite{prodan-schulzbaldes}; for background about the $d=2$ case and its relation to condensed matter physics, more specifically to the bulk-edge correspondence, see \cite{DZ24}. This note provides an elementary, AI-made proof of a recent result of Martinian--Stoiber \cite{MS26}, which states that $c_U(P)$ factors out in terms of its geometric and analytic data:
\begin{equation}\label{eq:main-intro}
    c_U(P) = \X_U \cdot c_H(P).
\end{equation}
In \eqref{eq:main-intro}, $\X_U$ is an intersection number between the boundaries $\p U_1, \dots, \p U_d$ and $c_H(P)$ is the standard $d$-dimensional Kubo formula \cite[Eq.~(2.15)]{prodan-schulzbaldes}, obtained from \eqref{eq:cU} by choosing half-spaces for the sets $U_j$: $H = (H_1,\dots,H_d)$, $H_j = \{x_j>0\}$. The notions of transversality, geometric Kubo formula \eqref{eq:cU}, intersection number, and the identity \eqref{eq:main-intro} are all higher-dimensional analogues of the case $d=2$ analyzed in \cite{DZ24}.

\subsection{Setup} We introduce a few objects and conditions needed to state our main result. First, we extend the transversality condition introduced in \cite[Def.~4]{DZ24} to higher dimensions:

\begin{definition}\label{def:transverse}
A family $U = (U_1,\dots,U_d)$ of open subsets of $\R^d$ is transverse if, for some $c\in(0,1)$,
\begin{equation}
    \Psi_U(x) \defeq 1+\sum_{j=1}^d\dist(x,\p U_j) \ \ge\ |x|^c, \qquad |x| \ge c^{-1}.
\end{equation}
\end{definition}

Transversality is a condition on the whole family: pairs of boundaries may stay close, as long as the $d$ boundaries have no ``close'' point at infinity. It allows us to make sense of the geometric Kubo formula \eqref{eq:cU}.

In order to state our main result, we need to extend to higher dimensions the ``intersection number'' introduced in \cite{DZ24}. 

\begin{definition}\label{def:X}
Fix a radial, nonnegative $\eta \in C_0^\infty\big(B(0,1)\big)$ with $\int_{\R^d} \eta = 1$, and set
\begin{equation}
    f_j \defeq \1_{U_j}*\eta, \qquad F_U \defeq (f_1,\dots,f_d) : \R^d\to[0,1]^d, \qquad J_U \defeq \det(\nabla f_1,\dots,\nabla f_d).
\end{equation}
The intersection number of $U$ is
\begin{equation}\label{eq:X}
    \X_U \defeq \int_{\R^d}J_U(x)\,dx.
\end{equation}
\end{definition}

This definition was suggested to us by Guo Chuan Thiang.   Here are some quick comments:
\begin{itemize}
    \item $J_U$ is compactly supported, see \Cref{fig:support}; hence $\X_U$ is well-defined. Indeed, if $\dist(x,\p U_j) \ge 1$, then $f_j$ is constant ($0$ or $1$) near $x$. So $J_U(x)\neq0$ forces $\Psi_U(x)\le 1+d$, and hence by transversality, $x$ must be in a compact set. 
    \item For $R$ large, $F_U(S_R)\subset\p[0,1]^d$, hence $\X_U$ is the degree of $F_U : S_R\to\p[0,1]^d$. In particular, it is an integer and, from a continuity argument, independent of $\eta$.
\end{itemize}
We relate \eqref{eq:X} to the original \cite[Def.~9]{DZ24}, corresponding to $d=2$:
\begin{itemize}
\item If the boundaries $\p U_j$ are smooth and meet transversally, then, with $\nu_j$ the unit normal to $\p U_j$ pointing into $U_j$,
    \begin{equation}\label{eq:1a}
        \X_U = \sum_{p\in\p U_1\cap\dots\cap\p U_d}\sgn\det\big(\nu_1(p),\dots,\nu_d(p)\big).
    \end{equation}
    In particular $\X_U$ extends the intersection number introduced in \cite{DZ24}; and $\X_H = 1$.
    \item If the $\p U_j$ are smooth and meet transversally, and $\gamma_j$ are curves representing the connected components of $\p U_2\cap\dots\cap\p U_d$ oriented by their tangent $\tau_j$, such that
    \begin{equation}
        \det(\tau_j,\nu_2,\dots,\nu_d)>0,
    \end{equation}
    then $\X_U$ is the algebraic number of times that the curves $\gamma_j$ enter $U_1$. In other words, $\X_U$ can be interpreted in terms of topological transport (as it should): it is the flux of $\p U_2\cap\dots\cap\p U_d$ through $U_1$. See \Cref{fig:X}. 
\end{itemize}

\subsection{Main result}
Our main result is an elementary proof of the higher-dimensional analogue of the $d=2$ formula 
\begin{equation}
    c_{U_1,U_2}(P) = \X_{U_1,U_2} \cdot c_{H_1,H_2}(P),
\end{equation}
proved in \cite[Thm.~4]{DZ24}. It recovers a recent result of Martinian--Stoiber \cite{MS26}.

\begin{theorem}\label{thm:main}
If $U$ is a transverse family of subsets of $\R^d$, then
\begin{equation}\label{eq:main}
    c_U(P) = \X_U\cdot c_H(P).
\end{equation}
\end{theorem}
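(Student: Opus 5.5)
We expand the trace in the kernel of $P$ and write $c_U(P)$ as a pairing of $P$ with a $d$-cochain on $\Z^d$, following Kitaev. We then deform the cochain, not the sets. Expanding the wedge product (antisymmetrized over the order of the factors $[P,\1_{U_j}]$) and using $P^2=P$, we will write
\begin{equation}
c_U(P) = \sum_{x_0,\dots,x_d\in\Z^d} \tr\big(P(x_0,x_1)P(x_1,x_2)\cdots P(x_d,x_0)\big)\,\Phi_U(x_0,\dots,x_d),
\end{equation}
where $\Phi_U$ is an alternating $d$-cochain built from the indicators, essentially $\Phi_U(x_0,\dots,x_d)=\det\big(\1_{U_j}(x_i)-\1_{U_j}(x_0)\big)_{i,j}$ (or its antisymmetrization $\sum_\sigma \sgn\sigma \prod_j \1_{U_j}(x_{\sigma(j)})$). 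The first step is to justify absolute convergence. Each difference $\1_{U_j}(x_i)-\1_{U_j}(x_0)$ vanishes unless $\p U_j$ lies within $\max_i|x_i-x_0|$ of $x_0$. With \eqref{eq:P-decay}, this confines $x_0$ to points where $\Psi_U(x_0)$ is at most polynomial in the mutual distances, and transversality then makes the sum converge.

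The second step is to smooth the cochain. We replace $\1_{U_j}$ by $f_j=\1_{U_j}*\eta$, or more generally by functions $g_j$ with $g_j-\1_{U_j}$ supported within distance $1$ of $\p U_j$. Interpolating linearly, $\1_{U_j}+t(g_j-\1_{U_j})$, keeps each interpolant equal to $0$ or $1$ away from a neighborhood of $\p U_j$, so the convergence bound stays uniform. Differentiating the pairing in $t$ changes only one slot at a time. The key algebraic fact is that the pairing $\lr{P,\delta\beta}$ vanishes for a $(d-1)$-cochain $\beta$ with suitable decay, because $\tr(P\cdots P)$ is a cycle: this follows from $P^2=P$ and cyclicity of the trace, exactly as in Kitaev's argument. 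The difference cochain is a coboundary, so $c_U(P)$ equals the pairing with the smooth cochain $\Phi_F$ built from $F_U$.

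The third step is to compare with the half-space family $H$. For a smooth cochain, $\Phi_F(x_0,\dots,x_d)$ is, to leading order, $\det(\nabla f_j(x_0)\cdot(x_i-x_0))$, which is the pullback under $F_U$ of the volume form $df_1\wedge\cdots\wedge df_d$ on $[0,1]^d$. The plan is to deform $F_U$ to $\X_U$ times a model map. Both $F_U$ and $F_H$, restricted to a large sphere $S_R$, map into $\p[0,1]^d$ with degrees $\X_U$ and $1$. The Poincaré lemma on $[0,1]^d$ should let us write the difference between the pulled-back form and $\X_U$ times the form for $H$ as $d\alpha$. The discrete counterpart is then a coboundary, and it pairs to zero with $P$ by Stokes' theorem on the discrete side. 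Concretely, one can take a homotopy in $\p[0,1]^d$ at infinity, and pair a single model with multiplicity by additivity over disjoint translated copies of $H$-type configurations. This yields $\lr{P,\Phi_F}=\X_U\lr{P,\Phi_{F_H}}=\X_U\, c_H(P)$.

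The main obstacle is the third step. The cochains are only asymptotically equal to pullbacks of forms, so one must check that the error cochains really are coboundaries of cochains with enough decay for the pairing to vanish, uniformly along the deformation at infinity. The degree must enter through $\int J_U$, not through local data. I would first try to define the cochain intrinsically as $\Phi_g(x_0,\dots,x_d)=\int_{\text{simplex}(g(x_0),\dots,g(x_d))}\omega$ for $g:\Z^d\to\R^d$. This cochain is exactly multilinear-coboundary-compatible: it satisfies $\Phi_g-\Phi_{g'}=\delta(\text{prism integral})$ by Stokes' theorem. With this choice, everything reduces to the fact that the integral of the volume form over the image is the degree.
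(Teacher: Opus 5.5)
Your pairing formula, the vanishing of $\tau_P$ on coboundaries of controlled alternating cochains, and the prism idea are all sound. But Step 3, which you rightly flag, has a genuine gap, and none of the devices you propose closes it. The factor $\X_U$ can only be extracted from a cochain that depends \emph{linearly} on a compactly supported top-degree form on the source $\R^d$. Your lattice cochains $\Phi_g(\bx)=\int_{[\Gg(\bx)]}\w$ are not of this type, for three reasons. First, they are homogeneous of degree $d$ in $g$, so replacing a map by $\X_U$ times a model map rescales $\Phi$ by $\X_U^d$. Second, $\tau_P(\Phi_g)$ is not additive in $g$, so additivity over translated copies of $H$ is unjustified. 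Third, the affine-to-affine prism identity $\Phi_g-\Phi_{g'}=\delta q$ holds for \emph{every} pair $g,g'$; all of its content lies in whether $q$ is controlled. That fails for $g=G_U$, $g'=G_H$ as soon as $\X_U\neq1$, since otherwise $c_U(P)=c_H(P)$ for every $P$. Likewise, saying $\Phi_{F_U}$ agrees with the pullback of $\w$ ``to leading order'' leaves an error you cannot write as a controlled coboundary. Your Step 2 (smoothing on the lattice) is correct, but it leaves a lattice cochain no closer to a differential form than the one you started with.

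The missing idea, which is the paper's key lemma, is an \emph{exact} lattice-to-continuum identity. Run your prism construction between the curved simplex $F_U\circ\sigma_{\by}$ and the affine simplex $\sigma_{\Gg_U(\by)}$, i.e.\ use $H_{\by}(t,s)=(1-s)F_U(\sigma_{\by}(t))+s\,\sigma_{\Gg_U(\by)}(t)$. Since $d(H_{\bx}^*\w)=0$, Stokes' theorem on $\Delta^d\times[0,1]$ gives $\Gg_U^*I[\w]=I[F_U^*\w]+\delta Q_U$. Here $I[\alpha](\bx)=\int_{[\bx]}\alpha$ and $Q_U(\by)=-\int_{\Delta^{d-1}\times[0,1]}H_{\by}^*\w$ is alternating. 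It is controlled because the $j$-th component of $H_{\by}$ is constant whenever $\conv\by$ stays at distance at least $1$ from $\p U_j$. Transversality then forces $Q_U(\by)=0$ unless $\lr{y_0}$ is polynomially bounded by $\lr{\diam\by}$. This yields $c_U(P)=\tau_P(I[J_U\w])$, which is linear in the form $J_U\w$, with no lattice smoothing needed. Next, $\rho=J_U-\X_UJ_H$ is compactly supported with $\int\rho=0$. The Poincar\'e lemma \emph{with compact supports on the source} $\R^d$ (not on $[0,1]^d$) gives $\rho\,\w=d\beta$ with $\beta$ compactly supported. Hence $I[\rho\,\w]=\delta I[\beta]$ with $I[\beta]$ trivially controlled, and $c_U(P)-\X_U\,c_H(P)=\tau_P(\delta I[\beta])=0$. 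The compact support of $\beta$ is what makes $I[\beta]$ controlled; a primitive without it would not do.
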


\begin{figure}[t]
\centering
\begin{minipage}{0.45\linewidth}
\centering
\includegraphics{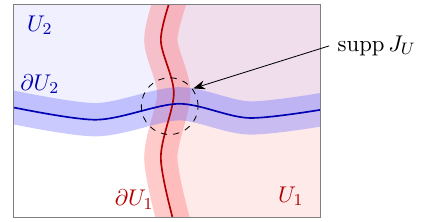}
\end{minipage}\hfill
\begin{minipage}{0.5\linewidth}
\setlength{\captionindent}{0pt}
\caption{In 2D, $f_1$ and $f_2$ are constant outside the $1$-neighborhoods of $\p U_1$ and $\p U_2$ (shaded bands). Hence $J_U = \det(\nabla f_1,\nabla f_2)$ vanishes outside the intersection of the bands (dashed), which is bounded by transversality.}
\label{fig:support}
\end{minipage}
\end{figure}

\begin{figure}[b]
\centering
\includegraphics[page=1]{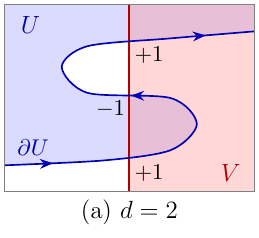}%
\hspace{0.25\linewidth}%
\includegraphics[page=2]{fig_intersection}
\caption{(a) In 2D, $\X$ counts the signed entrances of $\p U$ (oriented with $U$ on its left) into $V$, as in \cite[Fig.~1]{DZ24}. (b) In dimension $2n$, the hypersurfaces $\p U_2,\dots,\p U_{2n}$ meet along a curve $\gamma = \p U_2\cap\dots\cap\p U_{2n}$, and $\X_U$ counts the signed entrances of $\gamma$ into $U_1$ (schematic: $\p U_1$ is a hypersurface). In both pictures, $\X = +1-1+1 = 1$.}
\label{fig:X}
\end{figure}

The beauty of the formula \eqref{eq:main} is that it factors the physical quantity $c_U(P)$ into an integer, $\X_U$, depending only on the geometry, and an intrinsic quantity $c_H(P)$, depending only on the intrinsic nature of the topological phase. Ludewig and Thiang proved a geometric bulk-edge correspondence and along the way showed that, for idempotents with finite propagation, a universal multiple of $c_U(P)$ is an integer, which depends on $U$ only through its coarse cohomology class \cite{LT23,LT26}. In \cite{DZ24}, we  made this dependence explicit in the physical case $d=2$, proving \eqref{eq:DZ}. A consequence -- which could not directly follow from \cite{LT23,LT26} -- is that the absolutely continuous spectrum must fill the bulk spectral gap when truncating a non-trivial topological insulator to a region containing a parabola \cite{DSZ}. In a recent paper, Martinian--Stoiber  \cite{MS26} proved a geometric bulk-edge correspondence in all dimensions, and obtained \eqref{eq:main}.

The proof of \cite{MS26} uses sophisticated arguments from K-theory. It goes through index pairings on the uniform Roe algebra and KK-theory. The novelty of the present paper is to present a self-contained, elementary proof, accessible to mathematicians with only a basic background in differential geometry. It is almost entirely due to Claude Opus 5.5. It relies only on an algebraic identity satisfied by the Kubo formula (due to Kitaev \cite{Kit06}), relating it to simplicial forms; Stokes' theorem; and the Poincar\'e lemma.

\subsection{Proof strategy} The proof of \cite[Thm.~4]{DZ24} (which applies only to dimension $2$) deforms the sets $U_1, U_2$ and relies on planar geometry and topology. A key fact is that two points on a circle can be joined by a line. To adapt this approach to higher dimensions, one would need the -- obviously wrong -- analogous statement: a hypersurface in the sphere is the intersection of a sphere and a plane. Relying on \cite{DZ24} to prove
\Cref{thm:main} is then hopeless.

Instead of deforming the sets $U_j$ as in \cite{DZ24}, we take a more flexible approach and deform a more general object: a cochain. It turns out that such deformations do not impact $c_U(P)$. Specifically, we write $c_U(P)$ as a pairing:
\begin{equation}\label{eq:pairing-intro}
    c_U(P) = \sum_{\bx\in(\Z^d)^{d+1}} \Pi_P(\bx) \cdot \varphi_U(\bx),
\end{equation}
where $\Pi_P(\bx)$ depends on $P$ only, and $\varphi_U(\bx)$ depends on $U$ only. See \Cref{lem:cochain}. We call $\varphi_U$ a cochain because it takes the form of an integral over (suitably oriented) convex sets
\begin{equation}\label{eq:phiU-int}
\begin{gathered}
    \varphi_U(\bx) \defeq \int_{[\Gg_U(\bx)]} \w = I[\w]\big(\Gg_U(\bx)\big), \qquad     I[\alpha](\by) \defeq \int_{[\by]} \alpha, 
    \\ 
    \Gg_U(\bx) \defeq \big(G_U(x_0),\dots,G_U(x_d)\big), \qquad G_U(x) \defeq \big(\1_{U_1}(x),\dots,\1_{U_d}(x)\big).
\end{gathered}
\end{equation}
In \eqref{eq:phiU-int}, $\w$ is the Euclidean volume form and $[\by]$ denotes the oriented simplex  $\conv \by$, $\by = (y_0,\dots,y_d) \in (\R^d)^{d+1}$. Expressions of bulk topological markers as cochains go back to Kitaev \cite[App.~C]{Kit06}; for related mathematical identities without explicit connections to condensed matter physics, see \cite{Con85,CM90,Roe93}.
This realization has since been developed in several directions: see for instance real-space formulas for higher Chern numbers \cite{SHF25}, quantization via coarse cohomology \cite{LT23, LT26, MS26}, and interacting gapped systems \cite{KS, KSp20}.

The key step in the proof of \Cref{thm:main} is a homotopy argument, \Cref{lem:prism}. In differential-geometry terms, it shows that for each $\bx$, there exists a decaying $(d-1)$-form $\beta_\bx$ such that
\begin{equation}\label{eq:phiU-homotopy}
    \int_{[\Gg_U(\bx)]} \w = \int_{[\bx]} F_U^*\w + \int_{[\bx]} d\beta_\bx.
\end{equation}
Equivalently, using the notation $I$,
\begin{equation}
    I[\w]\big(\Gg_U(\bx)\big) = I[F_U^*\w](\bx) + I[d\beta_\bx](\bx).
\end{equation}
Using Stokes' theorem, we justify that $\int_{[\bx]} d\beta_\bx$ is a coboundary -- see \eqref{eq:coboundary} for the definition. For what matters to us at this point, this is an algebraic property that implies a vanishing pairing with $\Pi_P$:
\begin{equation}
    \sum_{\bx\in(\Z^d)^{d+1}} \Pi_P(\bx) \cdot I[d\beta_\bx](\bx) = 0,
\end{equation}
see \Cref{lem:pairing}.
Therefore, from $F_U^* \w = J_U \w$, we deduce that
\begin{equation}
    c_U(P) = \sum_{\bx\in(\Z^d)^{d+1}} \Pi_P(\bx) \cdot I[\w]\big(\Gg_U(\bx)\big) = \sum_{\bx\in(\Z^d)^{d+1}} \Pi_P(\bx) \cdot I[J_U \w](\bx).
\end{equation}

From Poincar\'e's lemma, $J_U \w$ and $\X_U \cdot J_H \w$ are cohomologically equivalent (equal up to the differential of a compactly supported $(d-1)$-form). By the above argument, this exact volume form does not contribute to $c_U(P)$. It follows that 
\begin{equation}
    c_U(P) = \sum_{\bx\in(\Z^d)^{d+1}} \Pi_P(\bx) \cdot I[J_U \w](\bx) = \X_U \cdot \sum_{\bx\in(\Z^d)^{d+1}} \Pi_P(\bx) \cdot I[J_H \w](\bx) = \X_U \cdot c_H(P).
\end{equation}
This is \eqref{eq:main}.

\subsection{Personal considerations} Let me expand on my relation with the present paper, since it relates deeply to the current developments around artificial intelligence. It provides an opportunity and a concrete basis for me to share some of my reflections.

In August 2024, Xiaowen Zhu and I posted \cite{DZ24}, where we proved the bulk-edge correspondence for topological insulators truncated along curved boundaries. This work established that if a $2$D topological phase with bulk index $c_H(P)$ is truncated to a set $U_1$, then its conductance into a transverse set $U_2$ is equal to
\begin{equation}\label{eq:DZ}
   c_{U_1,U_2}(P) = \X_{U_1,U_2} \cdot c_H(P).
\end{equation}
The formula \eqref{eq:DZ} is striking because it factors a physical quantity into one describing the geometry of the problem (encoded in $U_1, U_2$) and one describing functional-analytic aspects (encoded in $P$). Jacob Shapiro called \eqref{eq:DZ} a ``magical formula''.

The geometric Kubo formula $c_{U_1,U_2}(P)$ has the natural extension \eqref{eq:cU} to higher dimensions. The intersection number also naturally extends as \eqref{eq:1a}. I then formulated \eqref{eq:main} as a conjecture. In my view, \eqref{eq:DZ} and its multi-dimensional extension \eqref{eq:main} are elegant mathematical formulas that decorrelate intrinsic physical properties from geometry. I accept that its physical significance is limited in dimensions higher than $4$; yet I found beautiful mathematics even at the level of its sole formulation. 

Unfortunately, the approach of \cite{DZ24} could not be adapted to prove \Cref{thm:main}. At its core, \cite{DZ24} realizes a proper hypersurface in a large sphere as the intersection between this sphere and a (finite number of) hyperplane(s). This works in dimension $2$ (a proper curve and a sphere intersect at finitely many points), but this obviously fails in higher dimensions: a $(d-2)$-hypersurface in a $(d-1)$-sphere is almost never the intersection of this sphere with a hyperplane, i.e. a $(d-2)$-sphere! I knew that my approach with Xiaowen was too hands-on to prove \eqref{eq:main}, and that a more algebraic plan was needed. While such approaches for topological insulators do exist -- see e.g. \cite{Kit06,LT22,Lud23} for applications of cochains and \cite{BES94,Kit09,prodan-schulzbaldes,MS26} for K-theoretic arguments -- my background is in analysis and I did not have the time to learn something away from my roots. In January 2025, during a visit at the University of Washington, Guo Chuan Thiang proposed \eqref{eq:X} for the higher-dimensional intersection number. We however fell short of a concrete plan of action. In October 2025, and again in May 2026, I talked about the project with Guillaume Bal -- again the discussion remained there. Late August 2026, Martinian and Stoiber \cite{MS26} proved my conjecture, togther with a proof of the bulk-edge correspondence in all dimensions; I became aware of their work only at terminal stages of the present paper.  

My conjecture eventually emerged as particularly suited for LLMs: the central concepts were already developed in my paper with Xiaowen, it was clearly formulated, and I thought that a solution would involve tools that were present in the existing literature, but that I did not master. Early October 2026, I asked this question to Claude Opus 5.5. I fed it my paper with Xiaowen and a description of the project, together with an explanation of the above obstruction in higher dimensions. Claude proposed a 5-page solution within 40 minutes. Even in dimension two, Claude's proof is much simpler and more elegant than the one presented in \cite{DZ24}. The present paper is Claude's solution, together with a few of my own edits. Specifically, my contribution to the present paper has been to:
\begin{itemize}
    \item Identify the question and formulate the conjecture.
    \item Ask the question to Claude, with the material I found relevant.
    \item Simplify Claude's proof -- in particular I ``heroically'' removed an unnecessary part of its argument (Claude suggested the term heroic when I prompted it for a sarcastic acknowledgement, and I liked it).
    \item Expand technical details -- for instance, one has to carefully analyze the growth at infinity of the various cochains under consideration.
    \item Organize Claude's work and interpret the intuition behind it. 
    \item Track down (as well as possible, considering my limited background in algebraic tools) ownership of concepts and ideas.
\end{itemize}

On my end, Claude's biggest contribution has been to translate the algebraic tools of \cite{Roe93,CM90,Kit06,LT23,LT26} into a language that I could understand, and demonstrate their utility on a question that I cared about. While it is disturbing to see an LLM solve what could have been a full thesis in under an hour, I believe I learnt a lot from this project. I learnt about the recent result \cite{MS26} at terminal stages of this project, which Claude returned after I asked it to run a reference search. Looking at what Claude initially had access to, I believe Claude did not know about \cite{MS26} before (or at least did not base its response on it), but I could be wrong.

In my view, whether Claude knew about \cite{MS26} is not all that important. AI has demonstrated enough times that it is a powerful tool to prove deep mathematical results.  I prefer to stay out of controversies and prepare for a brutal transformation of mathematics. 
My thoughts go mainly to training future mathematicians. How should we teach the next generation of researchers, now that LLMs can solve Ph.D.-suited projects in just a few prompts? A few months ago, following this paper, I would have suggested extending \eqref{eq:main} to:
\begin{itemize}
    \item Other classes of topological phases? For instance, \eqref{eq:DZ} extends to time-reversal-invariant insulators in dimension $2$ \cite[Thms.~1 and 3]{DSZ26};
    \item Insulators with a mobility gap? See \cite{EGS05} for definitions.
    \item Insulating phases in the presence of interaction? See \cite{KS} for definitions.
\end{itemize}
In full honesty, these extensions are only of incremental importance: I do not expect them to yield deep new notions, techniques or connections. Yet they would have served as excellent training grounds for junior mathematicians, acting as stepping stones in their growth into independent, creative researchers. But today, I believe that an LLM, with adequate prompts, would produce satisfying responses. From the perspective of training graduate students, I see their significance dramatically reduced.  One can still learn from an AI-generated answer, like I did in this project; but I had enough time to think about the background, the challenges, and the elegance of the question beforehand.

For the time being, there remains a privileged space for the mathematicians who are able to identify the problems that shall guide us all: those that develop new concepts, new tools, new ideas. This is a visionary skill that is more difficult to acquire than mere technicity; it demands a broad knowledge and a refined taste, and there is no clear path to get there. This intuition may be what we would like to develop in the next generation of mathematicians: a good project for a graduate student may be to ``find your own'', and convince the mathematical community that it is right.

\subsection{Notation}
\begin{itemize}
    \item $\lr{x}\defeq(1+|x|^2)^{1/2}$ is the Japanese bracket.
    \item $P$ is a short-range projector on $\ell^2(\Z^d,\C^m)$, whose kernel $P(x,y)$ satisfies \eqref{eq:P-decay}.
    \item Given elements $A_1, \dots, A_d$ of an algebra, we set
    \begin{equation}
        A_1\wedge\cdots\wedge A_d \defeq \frac{1}{d!}\sum_{\sigma\in S_d}\sgn(\sigma)\,A_{\sigma(1)}\cdots A_{\sigma(d)}.
    \end{equation}
    \item $U=(U_1,\dots,U_d)$ is a transverse family of open subsets of $\R^d$, with transversality function $\Psi_U$, see \Cref{def:transverse}; $\1_{U_j}$ is the indicator function of $U_j$, also viewed as a multiplication operator. It has intersection number $\X_U$, defined in \eqref{eq:X}.
    \item $G_U$ is the vector of indicator functions $(\1_{U_1},\dots,\1_{U_d})$, and $\Gg_U$ applies $G_U$ to each entry of $\bx$, see \eqref{eq:phiU-int}.
    \item $F_U$ is the smoothed version $(\1_{U_1}*\eta,\dots,\1_{U_d}*\eta)$ of $G_U$, and $J_U = \det(\nabla F_U)$ is its Jacobian, see \Cref{def:X}.
    \item $\varphi_U$ is the $d$-cochain \eqref{eq:phiU}.
    \item $\bx=(x_0,\dots,x_k)$ is an element of $(\Z^d)^{k+1}$, $\conv \bx \subset \R^d$ is its convex hull, and $\diam\bx$ is the diameter of $\conv \bx$.
    \item $[\bx]$ is the oriented affine simplex with vertices $x_0,\dots,x_k$, see \S\ref{sec:proof};
    \item A $k$-cochain is a function on $(\Z^d)^{k+1}$; alternating and controlled cochains are defined in \Cref{def:cochain}.
    \item $\tau_P$ is the pairing with $\Pi_P$, both defined in \eqref{eq:tauP}.
    \item $\delta$ is the coboundary operator \eqref{eq:coboundary}.
     \item $I[\beta]$ is the cochain obtained by integrating the form $\beta$ over simplices, see \eqref{eq:phiU-int}.
     \item $\w$ is the volume form on $\R^d$.
\end{itemize}

\subsection*{Acknowledgments} I am grateful to Claude Opus 5.5, who provided an elegant solution to a project I cared deeply about (let's not cracher dans la soupe, comme on dit chez nous). I thank Guo Chuan Thiang for suggesting the formula \eqref{eq:X} for the intersection number. I also thank the NSF, which supports my research through CAREER grant DMS-2439949.

\section{Cochains}\label{sec:cochains}

In this section, we relate the geometric Kubo formula \eqref{eq:cU} to cochains, see \Cref{lem:cochain} below. This idea originally appeared in \cite[Appendix C]{Kit06}; see \cite{Con85,CM90,Roe93} for earlier occurrence in purely mathematical frameworks. The material below is standard for K-theoretic experts in condensed matter physics; we reproduce it here for the sake of completeness.

\begin{definition}\label{def:cochain}
A $k$-cochain is a function $\varphi : (\Z^d)^{k+1}\to\C$. We say that:
\begin{enumerate}[label=(\alph*)]
    \item $\varphi$ is alternating if
    \begin{equation}
        \varphi(x_{\pi(0)},\dots,x_{\pi(k)}) = \sgn(\pi)\,\varphi(x_0,\dots,x_k), \qquad \pi\in S_{k+1}.
    \end{equation}
    \item $\varphi$ is controlled if, for some $C,M\ge1$,
    \begin{equation}\label{eq:controlled}
        |\varphi(\bx)|\le C\lr{\diam \bx}^M\lr{x_0}^{-d-1}, \qquad \diam \bx \defeq \max_{i,j}|x_i-x_j|.
    \end{equation}
\end{enumerate}
\end{definition}

For $\bx\in(\Z^d)^{k+1}$ and a $k$-cochain $\varphi$, we set
\begin{equation}\label{eq:tauP}
    \Pi_P(\bx) \defeq \tr\big(P(x_0,x_1)P(x_1,x_2)\cdots P(x_k,x_0)\big), \qquad \tau_P(\varphi) \defeq \sum_{\bx\in(\Z^d)^{k+1}}\varphi(\bx) \cdot \Pi_P(\bx).
\end{equation}

\begin{lemma}\label{lem:convergence}
For every $N$ there is $C_N$ such that
\begin{equation}\label{eq:Pi-decay}
    |\Pi_P(\bx)| \le C_N\lr{\diam \bx}^{-N}\prod_{i=0}^{k-1}\lr{x_{i+1}-x_i}^{-N}.
\end{equation}
In particular, if $\varphi$ is controlled, the series \eqref{eq:tauP} defining $\tau_P(\varphi)$ converges absolutely.
\end{lemma}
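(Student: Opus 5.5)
The plan is to bound $\Pi_P(\bx)$ directly from the decay \eqref{eq:P-decay} of the kernel of $P$, distribute that decay over the edges of the cycle $x_0\to x_1\to\cdots\to x_k\to x_0$, and then sum the resulting bound against a controlled cochain.

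\textbf{Bound on $\Pi_P$.} Since the trace is over $\C^m$, we have $|\Pi_P(\bx)| \le m\prod_{i=0}^{k}|P(x_i,x_{i+1})|$, where indices are taken mod $k+1$, so that $x_{k+1}=x_0$. Applying \eqref{eq:P-decay} with exponent $N'$ to each factor gives
\begin{equation}
|\Pi_P(\bx)|\le m\,C_{N'}^{k+1}\prod_{i=0}^{k}\lr{x_{i+1}-x_i}^{-N'}.
\end{equation}
The key elementary fact is that $\diam\bx \le \sum_{i=0}^{k-1}|x_{i+1}-x_i|$. Indeed, any two vertices $x_i,x_j$ are joined by the path $x_i,x_{i+1},\dots,x_j$ through consecutive vertices, and the triangle inequality applies along it. Together with $\lr{a+b}\le \lr{a}\lr{b}$, this yields $\lr{\diam\bx}\le\prod_{i=0}^{k-1}\lr{x_{i+1}-x_i}$.

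Taking $N'=2N$ and discarding the closing factor $\lr{x_0-x_k}^{-N'}\le1$, we split each remaining factor as $\lr{\cdot}^{-N}\cdot\lr{\cdot}^{-N}$. The first halves multiply to at most $\lr{\diam\bx}^{-N}$, and the second halves give exactly the product in \eqref{eq:Pi-decay}. The resulting constant depends only on $N$, $k$ and $m$.

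\textbf{Absolute convergence.} Let $\varphi$ satisfy \eqref{eq:controlled} with constants $C,M$. Applying \eqref{eq:Pi-decay} with $N=M+d+1$, we get
\begin{equation}
|\varphi(\bx)\Pi_P(\bx)|\le C\,C_N\,\lr{x_0}^{-d-1}\prod_{i=0}^{k-1}\lr{x_{i+1}-x_i}^{-N},
\end{equation}
where the factor $\lr{\diam\bx}^{M}$ from \eqref{eq:controlled} has been absorbed by $\lr{\diam\bx}^{-N}$.

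We now change variables to $y_0=x_0$ and $y_i=x_i-x_{i-1}$ for $1\le i\le k$. This is a bijection of $(\Z^d)^{k+1}$, so the sum factorizes as
\begin{equation}
\Big(\sum_{y_0\in\Z^d}\lr{y_0}^{-d-1}\Big)\Big(\sum_{y\in\Z^d}\lr{y}^{-N}\Big)^{k}.
\end{equation}
Both factors are finite because $d+1>d$ and $N>d$.

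\textbf{Main point.} The only step requiring care is extracting the $\diam\bx$ decay from the chain of consecutive differences. This is handled by the path/triangle-inequality argument above and by spending half of the exponent on it. The rest of the argument is bookkeeping.
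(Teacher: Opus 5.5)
Your argument is correct and is essentially the paper's proof. You put half of the $2N$ powers from \eqref{eq:P-decay} on the edge product and the other half on $\lr{\diam\bx}$, using the length of a path through all vertices. You use the open chain and drop the closing edge, while the paper uses the closed loop; both work. Your change of variables $y_i=x_i-x_{i-1}$ is the paper's iterated summation over $x_k,\dots,x_0$.

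There is one slip, and it is easily fixed. The inequality $\lr{a+b}\le\lr{a}\lr{b}$ is false: for $a=b=1$ it reads $\sqrt5\le 2$. Correspondingly, $\bx=(0,e_1,2e_1)$ violates your claimed bound $\lr{\diam\bx}\le\prod_{i=0}^{k-1}\lr{x_{i+1}-x_i}$. Use Peetre's inequality $\lr{a+b}\le\sqrt2\,\lr{a}\lr{b}$ instead, which gives $\lr{\diam\bx}\le 2^{k/2}\prod_{i=0}^{k-1}\lr{x_{i+1}-x_i}$. The extra factor $2^{Nk/2}$ is then absorbed into $C_N$, just like the paper's $c_k^{-N}$.
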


\begin{proof} With $x_{k+1}\defeq x_0$, the decay \eqref{eq:P-decay} gives $|P(x_i,x_{i+1})|\le C_{2N}\lr{x_{i+1}-x_i}^{-2N}$ for each of the $k+1$ factors. Half the powers produce the product from $i=0$ to $k-1$ in \eqref{eq:Pi-decay}. The other half produce
\begin{equation}
    \prod_{i=0}^k\lr{x_{i+1}-x_i}^{-N} \le c_k^{-N} \lr{\diam \bx}^{-N}.
\end{equation}    
Indeed, we have     
\begin{equation}
    \prod_{i=0}^k\lr{x_{i+1}-x_i}\ \ge\ c_k\lr{\diam \bx},
\end{equation}
which holds because the closed path $x_0\to x_1\to\dots\to x_k\to x_0$ has length at least $\diam \bx$. This proves \eqref{eq:Pi-decay}.

Now let $\varphi$ be controlled. By \eqref{eq:controlled} and \eqref{eq:Pi-decay} with $N = M+d+1$,
\begin{equation}
    |\varphi(\bx) \cdot \Pi_P(\bx)| \le C\lr{x_0}^{-d-1}\prod_{i=0}^{k-1}\lr{x_{i+1}-x_i}^{-d-1}.
\end{equation}
Summing over $x_k$, then $x_{k-1}$, \dots, then $x_0$, we obtain:
\begin{equation}\label{eq:sum-bound}
    \sum_{\bx\in(\Z^d)^{k+1}}|\varphi(\bx) \cdot \Pi_P(\bx)| \le C\sum_{x_0\in\Z^d}\lr{x_0}^{-d-1}\Big(\sum_{y\in\Z^d}\lr{y}^{-d-1}\Big)^k < \infty.
\end{equation}
This completes the proof. 
\end{proof}

\begin{lemma}\label{lem:cochain}
Let $U$ be transverse and
\begin{equation}
    G_U \defeq (\1_{U_1},\dots,\1_{U_d}) : \Z^d\to\{0,1\}^d.
\end{equation}
Then $c_U(P) = \tau_P(\varphi_U)$, where $\varphi_U$ is the alternating and controlled $d$-cochain
\begin{equation}\label{eq:phiU}
    \varphi_U(x_0,\dots,x_d) \defeq \frac{1}{d!}\det\big[G_U(x_1)-G_U(x_0),\dots,G_U(x_d)-G_U(x_0)\big].
\end{equation}
\end{lemma}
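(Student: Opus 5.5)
We will prove the lemma in three steps: expand the trace into a sum of kernel products, recognise the determinant \eqref{eq:phiU} through a telescoping substitution, and bound $\varphi_U$ using transversality. Write $g_j\defeq\1_{U_j}$ and $A_j\defeq[P,g_j]$, whose kernel is
\begin{equation}
A_j(x,y)=P(x,y)\big(g_j(y)-g_j(x)\big).
\end{equation}

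\textbf{Expansion of the trace.} For a permutation $\sigma\in S_d$, the kernel of $PA_{\sigma(1)}\cdots A_{\sigma(d)}$ on the diagonal gives
\begin{equation}
\Tr\big(PA_{\sigma(1)}\cdots A_{\sigma(d)}\big)=\sum_{\bx}\tr\big(P(x_0,x_1)\cdots P(x_d,x_0)\big)\prod_{i=1}^d\big(g_{\sigma(i)}(x_{i+1})-g_{\sigma(i)}(x_i)\big),
\end{equation}
with the convention $x_{d+1}=x_0$. Here the leading $P$ is the factor $P(x_0,x_1)$, and $A_{\sigma(i)}$ contributes $P(x_i,x_{i+1})$. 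Antisymmetrizing over $\sigma$ and dividing by $d!$, the sum of products becomes $\frac1{d!}\det\big[G(x_2)-G(x_1),\dots,G(x_{d+1})-G(x_d)\big]$. In this matrix, column $i$ is the vector $G(x_{i+1})-G(x_i)$.

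\textbf{Recognising the determinant.} Column operations replace the differences of consecutive points by the differences $G(x_i)-G(x_1)$. The determinant therefore equals the oriented volume of the simplex $(G(x_1),\dots,G(x_d),G(x_0))$, up to the factor $d!$. By the cyclic relabeling $x_0\mapsto x_1\mapsto\cdots$, this volume is $\pm\det[G(x_1)-G(x_0),\dots,G(x_d)-G(x_0)]$. The sign is that of a $(d+1)$-cycle, which is $+1$ because $d$ is even. Since $\Pi_P$ is invariant under cyclic shifts of $\bx$, we may reindex the sum to get $\sum_\bx\Pi_P(\bx)\varphi_U(\bx)$.

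The alternating property is clear: the determinant in \eqref{eq:phiU} is $d!$ times the signed volume of $\conv(G(x_0),\dots,G(x_d))$, and that volume is alternating in the vertices.

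\textbf{Justifying the rearrangements.} We must show that each operator $PA_{\sigma(1)}\cdots A_{\sigma(d)}$ is trace class, or at least that the diagonal sums converge absolutely. Once this holds, the exchange of sum and determinant is legitimate, and the equality with $\tau_P$ then follows from \Cref{lem:convergence}. Both points reduce to showing that $\varphi_U$ is controlled, so this is the main obstacle.

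\textbf{Controlledness.} Suppose $\varphi_U(\bx)\neq0$. Then every column $G(x_i)-G(x_0)$ is nonzero, and more precisely every row of the matrix is nonzero. Hence for each $j$ there is an $i$ with $g_j(x_i)\neq g_j(x_0)$. The segment from $x_0$ to $x_i$ then crosses $\p U_j$, so $\dist(x_0,\p U_j)\le\diam\bx$. Summing over $j$ gives $\Psi_U(x_0)\le 1+d\,\diam\bx$.

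By transversality, $|x_0|^c\le 1+d\diam\bx$ once $|x_0|\ge c^{-1}$. This yields $\lr{x_0}\le C\lr{\diam\bx}^{1/c}$.

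The entries of the matrix lie in $\{-1,0,1\}$, so $|\varphi_U|\le C$. Combining, we obtain
\begin{equation}
|\varphi_U(\bx)|\le C\lr{\diam\bx}^{(d+1)/c}\lr{x_0}^{-d-1},
\end{equation}
which is exactly \eqref{eq:controlled} with $M=(d+1)/c$.

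The same estimate applies to the unsymmetrized products for each fixed $\sigma$, since these are products of $\pm$ indicator differences with the same row-nonvanishing requirement. Together with \eqref{eq:Pi-decay}, this gives absolute convergence of the kernel sums. Trace-class-ness follows by the same bound applied to the absolute values of the kernels, or alternatively we can interpret $\Tr$ as the absolutely convergent diagonal sum. This completes the argument.
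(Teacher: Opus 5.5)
Your proof is correct and follows the paper's argument essentially step by step. You expand the trace into kernel products weighted by $\Pi_P$, antisymmetrize into a determinant, telescope the columns, and use transversality (every $\p U_j$ lies within $\diam\bx$ of $x_0$) to show the cochains are controlled. The differences are cosmetic: the paper labels the summation variables so that the leading $P$ joins $x_d$ to $x_0$, which yields the columns $G_U(x_k)-G_U(x_0)$ directly and avoids your cyclic-shift step, where either the sign $(-1)^d=+1$ or the cyclic invariance of $\Pi_P$ alone would suffice and you redundantly invoke both; the paper also proves controlledness for the unsymmetrized products first and deduces it for $\varphi_U$, the reverse of your order.
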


For analogous statements, see \cite[(125)]{Kit06}, \cite[Lem.~2.16]{LT23} in dimension $2$ and \cite[Prop.~5.10, Lem.~5.11]{MS26}, \cite[Lem.~5.9]{LT26} in higher dimensions.

\begin{proof} The kernel of $[P,\1_{U_j}]$ is
\begin{equation}
    [P,\1_{U_j}](x,y) = P(x,y)\big(\1_{U_j}(y)-\1_{U_j}(x)\big).
\end{equation}
We label the summation variables so that the $k$-th commutator joins $x_{k-1}$ to $x_k$, and the first factor $P$ joins $x_d$ to $x_0$. By cyclicity of $\tr$, we get
\begin{equation}\label{eq:trace-expansion}
    \Tr\big(P[P,\1_{U_1}]\cdots[P,\1_{U_d}]\big) = \sum_{\bx\in(\Z^d)^{d+1}}\Pi_P(\bx) \cdot \prod_{k=1}^d\big(\1_{U_k}(x_k)-\1_{U_k}(x_{k-1})\big).
\end{equation}
This sum converges absolutely by \Cref{lem:convergence}, because the product
\begin{equation}\label{eq:AU}
    A_U(\bx) \defeq \prod_{k=1}^d\big(\1_{U_k}(x_k)-\1_{U_k}(x_{k-1})\big)
\end{equation}
emerging above defines a controlled $d$-cochain. Indeed, for $A_U(\bx)$ to be non-zero, one of the points $x_{k-1}, x_k$ needs to be in $U_k$, and the other in $U_k^c$. Hence
\begin{equation}
    \dist(x_0,\p U_k) \leq \max\big(|x_0-x_k|, |x_0 - x_{k-1}|\big) \leq \diam \bx.
\end{equation}
Summing over $k$, and using transversality, we obtain that if $A_U(\bx) \neq 0$ and $|x_0| \geq c^{-1}$,
\begin{equation}
    |x_0|^c \leq \Psi_U(x_0)\le 1+d\diam \bx.
\end{equation}
In particular, using that all factors in \eqref{eq:AU} are in $\{-1,0,1\}$, if $A_U(\bx)\neq 0$ and $|x_0| \geq c^{-1}$,
\begin{equation}\label{eq:AU-bound}
    |A_U(\bx)| \leq 1 \leq \big((1+d\diam \bx) |x_0|^{-c}\big)^{\tfrac{d+1}{c}} \leq C \lr{\diam \bx}^{\tfrac{d+1}{c}} \lr{x_0}^{-d-1}.
\end{equation}
Since $c\in(0,1)$, the exponent $\frac{d+1}{c}$ is larger than $1$, as required in \Cref{def:cochain}. We note that if $A_U(\bx) = 0$ or $|x_0| \leq c^{-1}$, then \eqref{eq:AU-bound} holds trivially (after potentially increasing $C$). Therefore $A_U$ is controlled.

Antisymmetrizing \eqref{eq:trace-expansion}, we obtain
\begin{equation}
\begin{aligned}
    c_U(P) &= \Tr\big(P\,[P,\1_{U_1}]\wedge\cdots\wedge[P,\1_{U_d}]\big) 
    \\
    &= \sum_{\bx\in(\Z^d)^{d+1}} \varphi_U(\bx) \cdot \Pi_P(\bx) = \tau_P(\varphi_U), 
 \end{aligned}
\end{equation}
where $\varphi_U$ is the $d$-cochain defined by
\begin{equation}
    \varphi_U(\bx) \defeq \frac{1}{d!}\det[v_1,\dots,v_d], \qquad v_k \defeq G_U(x_k)-G_U(x_{k-1}).
\end{equation}
Replacing each $v_k$ by $v_1+\dots+v_k = G_U(x_k)-G_U(x_0)$ does not change the above determinant, and confirms the identity \eqref{eq:phiU}. This identity immediately justifies that the cochain $\varphi_U$ is alternating. Moreover, it is a linear combination of $d!$ (controlled) cochains of the form \eqref{eq:AU} over permutations of $U = (U_1, \dots, U_d)$. So $\varphi_U$ is also controlled. This completes the proof.
\end{proof}

The coboundary of a $k$-cochain $\varphi$ is the $(k+1)$-cochain
    \begin{equation}\label{eq:coboundary}
        \delta\varphi(x_0,\dots,x_{k+1}) \defeq \sum_{i=0}^{k+1}(-1)^i\varphi(x_0,\dots,\widehat{x_i},\dots,x_{k+1}),
    \end{equation}
    where $\widehat{x_i}$ in \eqref{eq:coboundary} means that $x_i$ is omitted. If $\varphi$ is controlled, then so is $\delta\varphi$. Indeed, write $\bx=(x_0,\dots,x_{k+1})$. The terms with $i\ge1$ are bounded by $C\lr{\diam\bx}^M\lr{x_0}^{-d-1}$. For $i=0$, Peetre's inequality $\lr{x_0}\le\sqrt2\,\lr{x_0-x_1}\lr{x_1}$ gives
\begin{equation}\label{eq:peetre}
    \lr{x_1}^{-d-1}\le C\lr{\diam\bx}^{d+1}\lr{x_0}^{-d-1}.
\end{equation}
Summing over $i$ provides \eqref{eq:controlled} for $\delta \varphi$, so $\delta\varphi$ is controlled as claimed.

\begin{lemma}\label{lem:pairing}
If $\psi$ is an alternating controlled $(d-1)$-cochain, then $\tau_P(\delta\psi) = \tau_P(\psi) = 0$.
\end{lemma}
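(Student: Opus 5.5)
The plan is to prove the second identity $\tau_P(\psi)=0$ by a cyclic symmetry argument, and then reduce $\tau_P(\delta\psi)$ to $\tau_P(\psi)$ using $P^2=P$.

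\emph{Step 1: $\tau_P(\psi)=0$.} Here $\psi$ is a $(d-1)$-cochain, so it is evaluated on $d$ points $(x_0,\dots,x_{d-1})$. Let $\sigma$ be the cyclic shift $(x_0,\dots,x_{d-1})\mapsto(x_1,\dots,x_{d-1},x_0)$.
\begin{itemize}
\item Cyclicity of $\tr$ gives $\Pi_P\circ\sigma=\Pi_P$.
\item $\sigma$ is a $d$-cycle, so it has sign $(-1)^{d-1}=-1$ because $d=2n$ is even. Since $\psi$ is alternating, $\psi\circ\sigma=-\psi$.
\item The series defining $\tau_P(\psi)$ converges absolutely by \Cref{lem:convergence}. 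Reindexing by the bijection $\sigma$ of $(\Z^d)^d$ is therefore legitimate.
\end{itemize}
Together these give $\tau_P(\psi)=\sum_\bx \psi(\sigma\bx)\Pi_P(\sigma\bx)=-\tau_P(\psi)$, hence $\tau_P(\psi)=0$. This is the only place where evenness of $d$ enters. No self-adjointness of $P$ is needed, only cyclicity.

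\emph{Step 2: $\tau_P(\delta\psi)=\tau_P(\psi)$.} Write $\bx=(x_0,\dots,x_d)$ and expand
\[
\tau_P(\delta\psi)=\sum_{i=0}^d(-1)^i\sum_{\bx}\psi(x_0,\dots,\widehat{x_i},\dots,x_d)\,\Pi_P(\bx).
\]
\begin{itemize}
\item Splitting the sum into $d+1$ separate series requires each series to converge absolutely on its own; this is addressed in Step 3.
\item In the $i$-th term, first sum over the omitted variable $x_i$. The factor $\psi$ does not depend on $x_i$. The only factors of $\Pi_P(\bx)$ involving $x_i$ are $P(x_{i-1},x_i)P(x_i,x_{i+1})$, with indices mod $d+1$. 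Summing over $x_i$ and using $P^2=P$ gives $P(x_{i-1},x_{i+1})$.
\item Hence the inner sum yields $\Pi_P$ of the reduced $d$-tuple. For $i=0$ one first uses cyclicity of $\tr$ so that the contracted factors are $P(x_d,x_0)P(x_0,x_1)$.
\item Each term therefore equals $(-1)^i\tau_P(\psi)$. Summing over the $d+1$ indices, an odd number of alternating signs, gives $\tau_P(\delta\psi)=\tau_P(\psi)$, which vanishes by Step 1.
\end{itemize}

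\emph{Step 3 (main obstacle): absolute convergence of each term.} This is the technical point, since a priori we only know $\delta\psi$ is controlled, not the individual pieces. For fixed $i$ we need
\[
\sum_\bx |\psi(\widehat{\bx}_i)|\,|\Pi_P(\bx)|<\infty .
\]
\begin{itemize}
\item The diameter of the reduced tuple is at most $\diam\bx$.
\item For $i\ge1$, controlledness gives $|\psi(\widehat{\bx}_i)|\le C\lr{\diam\bx}^M\lr{x_0}^{-d-1}$.
\item For $i=0$, the base point is $x_1$; Peetre's inequality \eqref{eq:peetre} converts $\lr{x_1}^{-d-1}$ into $C\lr{\diam\bx}^{d+1}\lr{x_0}^{-d-1}$.
\end{itemize}
In all cases $|\psi(\widehat{\bx}_i)|\le C\lr{\diam\bx}^{M'}\lr{x_0}^{-d-1}$, which is exactly the bound used in \Cref{lem:convergence} (with $k=d$). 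So the decay estimate \eqref{eq:Pi-decay} makes each of the $d+1$ series absolutely summable. This also justifies Fubini for the inner summation over $x_i$ in Step 2.
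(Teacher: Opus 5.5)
Your proposal is correct and follows essentially the same route as the paper. Both arguments split $\tau_P(\delta\psi)$ into $d+1$ absolutely convergent face sums (controlledness for $i\ge1$, Peetre's inequality for $i=0$), contract the omitted variable via $P^2=P$ to get $(-1)^i\tau_P(\psi)$, and kill $\tau_P(\psi)$ with the cyclic $d$-cycle of sign $-1$; you only present the cyclic argument first. One small slip: evenness of $d$ is also used in Step 2, to get $\sum_{i=0}^d(-1)^i=1$, so Step 1 is not the only place where it enters.
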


This result already appeared in \cite[II, Prop.~14]{Con85} and \cite[\S4.2]{Roe93}; and when $d=2$, in \cite[(123)]{Kit06}. See also \cite[(48)--(50)]{KS} and \cite[(16), (23)]{KSp20} for statements related to interacting systems.

\begin{proof} By definition,
\begin{equation}
    \tau_P(\delta\psi) = \sum_{i=0}^d(-1)^i\sum_{\bx\in(\Z^d)^{d+1}}\psi(x_0,\dots,\widehat{x_i},\dots,x_d) \cdot \Pi_P(\bx).
\end{equation}
The infinite sum above is absolutely convergent. Indeed, set $N = M+d+1$. If $i \geq 1$, we have:
\begin{align}
    \sum_{\bx\in(\Z^d)^{d+1}}\big|\psi(x_0,\dots,\widehat{x_i},\dots,x_d) \cdot \Pi_P(\bx)\big| &\leq C\sum_{\bx\in(\Z^d)^{d+1}} \lr{x_0}^{-d-1} \lr{\diam \bx}^{M-N} \prod_{j=0}^{d-1}\lr{x_{j+1}-x_j}^{-N} \notag\\
    &\leq C\sum_{\bx\in(\Z^d)^{d+1}} \lr{x_0}^{-d-1} \prod_{j=0}^{d-1}\lr{x_{j+1}-x_j}^{-N}, \label{eq:face-i}
\end{align}
and if $i=0$, then by \eqref{eq:peetre}:
\begin{align}
    \sum_{\bx\in(\Z^d)^{d+1}}\big|\psi(\widehat{x_0},x_1,\dots,x_d) \cdot \Pi_P(\bx)\big| &\leq C\sum_{\bx\in(\Z^d)^{d+1}} \lr{x_1}^{-d-1} \lr{\diam \bx}^{M-N} \prod_{j=0}^{d-1}\lr{x_{j+1}-x_j}^{-N} \notag\\
    &\leq C\sum_{\bx\in(\Z^d)^{d+1}} \lr{x_0}^{-d-1} \lr{\diam \bx}^{M+d+1-N} \prod_{j=0}^{d-1}\lr{x_{j+1}-x_j}^{-N} \notag\\
    &\leq C\sum_{\bx\in(\Z^d)^{d+1}} \lr{x_0}^{-d-1} \prod_{j=0}^{d-1}\lr{x_{j+1}-x_j}^{-N}. \label{eq:face-0}
\end{align}
The right-hand sides of \eqref{eq:face-i} and \eqref{eq:face-0} converge as in \eqref{eq:sum-bound}.

In the $i$-th term, we sum over $x_i$ first. Since $P^2=P$,
\begin{equation}
    \sum_{x_i\in\Z^d}P(x_{i-1},x_i)P(x_i,x_{i+1}) = P(x_{i-1},x_{i+1}) \qquad (\text{indices mod } d+1),
\end{equation}
so the $i$-th term equals $(-1)^i\tau_P(\psi)$, and, since $d$ is even, $\tau_P(\delta\psi) = \tau_P(\psi)$. The permutation
\begin{equation}
    (y_0,\dots,y_{d-1})\mapsto(y_1,\dots,y_{d-1},y_0)
\end{equation}
leaves $\Pi_P$ invariant and flips the sign of $\psi$, because a $d$-cycle has signature $(-1)^{d+1} = -1$. Hence $\tau_P(\psi) = 0$, and this completes the proof.
\end{proof}

\section{Proof of \texorpdfstring{\Cref{thm:main}}{the theorem}}\label{sec:proof}

For $\bx\in(\R^d)^{k+1}$, write $[\bx]$ for the oriented affine simplex $\sigma_{\bx}:\Delta^k\to\R^d$ with vertices $x_0,\dots,x_k$, where $\Delta^k$ is the standard $k$-simplex:
\begin{equation}
    \Delta^k \defeq \big\{t\in[0,1]^{k+1} : t_0+\dots+t_k=1\big\}, \qquad \sigma_{\bx}(t) \defeq \sum_{i=0}^k t_ix_i.
\end{equation}
For a $k$-form $\beta$, we set
\begin{equation}
    I[\beta](\bx)\defeq\int_{[\bx]}\beta, \qquad\text{so that Stokes' formula reads}\qquad I[d\beta] = \delta I[\beta].
\end{equation}

In this section, we use de Rham cohomology to prove \Cref{thm:main}. The key element is to interpret the formula \eqref{eq:phiU} as simplicial integration: we write  it as
\begin{equation}
    \varphi_U(\bx) = \int_{[\Gg_U(\bx)]} \w = \big(\Gg_U^* I[\w]\big)(\bx), \qquad \Gg_U(\bx)\defeq\big(G_U(x_0),\dots,G_U(x_d)\big),
\end{equation}
where $\w$ is the standard volume form on $\R^d$. It turns out that $\Gg_U^* I[\w]$ is not $I[G_U^* \w]$. This is false even at the formal level, setting smoothness issues aside ($G_U^* \w$ is, technically speaking, not even defined): $\Ff^*I[\w] = I[F^* \w]$ generally fails for non-affine $F$. What holds instead is an identity up to a coboundary:

\begin{lemma}\label{lem:prism}
Let $F:\R^d\to\R^d$ be smooth and $G:\Z^d\to\R^d$ be any map. Then there exists an alternating $(d-1)$-cochain $q$ such that
\begin{equation}\label{eq:prism}
    \Gg^*I[\w] = I[F^*\w]+\delta q, \qquad \Gg(\bx) \defeq \big(G(x_0),\dots,G(x_d)\big).
\end{equation}
\end{lemma}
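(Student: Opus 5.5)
We will prove the lemma with a prism (chain-homotopy) construction. For $\bx\in(\Z^d)^{k+1}$, let $H_\bx:[0,1]\times\Delta^k\to\R^d$ be the straight-line homotopy between the singular simplex $F\circ\sigma_\bx$ and the affine simplex $\sigma_{\Gg(\bx)}$:
\begin{equation}
    H_\bx(s,t) \defeq (1-s)\,F\Big(\sum_i t_ix_i\Big) + s\sum_i t_i\,G(x_i).
\end{equation}
This map is smooth. Two features are essential. At $s=0$ it is $F\circ\sigma_\bx$, so $\int H_\bx^*\w|_{s=0}=\int_{[\bx]}F^*\w = I[F^*\w](\bx)$. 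At $s=1$ it is $\sigma_{\Gg(\bx)}$, so we get $\Gg^*I[\w](\bx)$. Moreover, the restriction of $H_\bx$ to a face $\{t_i=0\}$ is exactly $H_{\bx^{(i)}}$, where $\bx^{(i)}$ omits $x_i$. The construction is therefore compatible with faces.

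Define the $(d-1)$-cochain
\begin{equation}
    q(\by) \defeq \int_{[0,1]\times\Delta^{d-1}} H_\by^*\w, \qquad \by\in(\Z^d)^{d},
\end{equation}
with $[0,1]\times\Delta^{d-1}$ carrying the product orientation. Apply Stokes' theorem to the closed $d$-form $\w$, which gives $d(H_\bx^*\w)=0$, on the $(d+1)$-dimensional prism $[0,1]\times\Delta^d$. The boundary is $\{1\}\times\Delta^d - \{0\}\times\Delta^d - [0,1]\times\p\Delta^d$, and $\p\Delta^d=\sum_i(-1)^i\{t_i=0\}$. Combined with face compatibility, this yields
\begin{equation}
    0=\Gg^*I[\w](\bx) - I[F^*\w](\bx) - \sum_{i=0}^d(-1)^i q(\bx^{(i)}),
\end{equation}
which is \eqref{eq:prism} once the sign convention for the product orientation is fixed. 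If the signs come out reversed, we replace $q$ by $-q$.

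It remains to check that $q$ is alternating. A permutation $\pi$ of the vertices corresponds to an affine automorphism of $\Delta^{d-1}$ with orientation sign $\sgn\pi$. It intertwines $H_{\by\circ\pi}$ with $H_\by$, so $q(\by\circ\pi)=\sgn(\pi)q(\by)$. The same argument shows that $I[\cdot]$ and $\Gg^*I[\w]$ are alternating, which keeps the formula consistent.

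We expect the main obstacle to be bookkeeping rather than any idea. The orientation signs must be tracked carefully: the relative sign of the boundary of $[0,1]\times\Delta^d$ against the coboundary convention \eqref{eq:coboundary}, and the identification of the face $\{t_i=0\}$ with $\Delta^{d-1}$ through the ordered vertices $\bx^{(i)}$. We will fix these by checking the standard formula $\p(I\times\sigma)=\{1\}\times\sigma-\{0\}\times\sigma-I\times\p\sigma$. Although the lemma does not claim it, a later step will also need $q$ to be controlled so that \Cref{lem:pairing} applies. In the application, $F=F_U$ and $G=G_U$, with $F_U-G_U$ vanishing away from the boundaries $\p U_j$. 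For that estimate we will bound $|q(\by)|$ by the volume swept out: the integrand vanishes unless some point of the prism lies near all $\p U_j$, and the Jacobian is polynomial in $\diam\by$. We defer this to where it is used.
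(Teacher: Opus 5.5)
Your proposal is correct and is essentially the paper's proof: the same straight-line homotopy $H_\by$ from $F\circ\sigma_\by$ to $\sigma_{\Gg(\by)}$, Stokes' theorem on the prism combined with compatibility on the faces, and alternation via the affine vertex-permuting automorphisms of $\Delta^{d-1}$. With your convention the signs already come out right, so the $q\mapsto -q$ hedge is unnecessary; moreover, reordering $[0,1]\times\Delta^{d-1}$ as $\Delta^{d-1}\times[0,1]$ introduces a factor $(-1)^{d-1}=-1$, so your $q$ coincides exactly with the paper's $q(\by)=-\int_{\Delta^{d-1}\times[0,1]}H_\by^*\w$.
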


\begin{proof} As for $\Gg_U$, we use the notation
\begin{equation}
    \Gg(\by)\defeq\big(G(y_0),\dots,G(y_k)\big), \qquad \big(\Gg^*I[\w]\big)(\bx)\defeq I[\w]\big(\Gg(\bx)\big),
\end{equation}
for $\by\in(\Z^d)^{k+1}$ and $\bx\in(\Z^d)^{d+1}$. Let $H_{\by}:\Delta^k\times[0,1]\to\R^d$ be the straight-line homotopy from $F\circ\sigma_{\by}$ to $\sigma_{\Gg(\by)}$, and $q$ the $(d-1)$-cochain:
\begin{equation}
    H_{\by}(t,s)\defeq(1-s)\,F\big(\sigma_{\by}(t)\big)+s\,\sigma_{\Gg(\by)}(t), \qquad q(\by)\defeq-\int_{\Delta^{d-1}\times[0,1]}H_{\by}^*\w.
\end{equation}

The cochain $q$ is alternating. Indeed, let $\pi$ be a permutation of $\{0,\dots,d-1\}$, and let $A_\pi:\Delta^{d-1}\to\Delta^{d-1}$ be the affine map permuting the vertices of $\Delta^{d-1}$ according to $\pi$, i.e.\ $(A_\pi t)_{\pi(i)}=t_i$; its orientation is $\sgn(\pi)$, as one checks on transpositions. With
\begin{equation}
    \by_\pi\defeq\big(y_{\pi(0)},\dots,y_{\pi(d-1)}\big), \qquad\text{we have}\qquad \sigma_{\by_\pi}=\sigma_{\by}\circ A_\pi, \qquad \sigma_{\Gg(\by_\pi)}=\sigma_{\Gg(\by)}\circ A_\pi.
\end{equation}
Hence $H_{\by_\pi}=H_{\by}\circ(A_\pi\times\mathrm{id})$, and the change of variables formula gives $q(\by_\pi)=\sgn(\pi)\,q(\by)$.

We now prove the identity \eqref{eq:prism}. Fix $\bx\in(\Z^d)^{d+1}$ and set $\bx^{(i)}\defeq(x_0,\dots,\widehat{x_i},\dots,x_d)$. We have $d\w=0$; hence $d(H_{\bx}^*\w)=H_{\bx}^*(d\w)=0$. Moreover, $H_{\bx}$ restricts to $H_{\bx^{(i)}}$ on $\p_i\Delta^d\times[0,1]$, where $\p_i\Delta^d\cong\Delta^{d-1}$ is the $i$-th face of $\Delta^d$. Since $d$ is even, the oriented boundary of $\Delta^d\times[0,1]$ is
\begin{equation}
    \p\big(\Delta^d\times[0,1]\big) = \Delta^d\times\{1\} - \Delta^d\times\{0\} + \sum_{i=0}^d(-1)^i\,\p_i\Delta^d\times[0,1].
\end{equation}
Therefore, Stokes' formula gives
\begin{equation}
\begin{aligned}
    0 &= \int_{\Delta^d\times[0,1]} d\big(H_{\bx}^*\w\big) = \int_{\p(\Delta^d\times[0,1])} H_{\bx}^*\w \\
    &= \int_{\Delta^d}\sigma_{\Gg(\bx)}^*\w - \int_{\Delta^d}(F\circ\sigma_{\bx})^*\w + \sum_{i=0}^d(-1)^i\int_{\Delta^{d-1}\times[0,1]}H_{\bx^{(i)}}^*\w \\
    &= \big(\Gg^*I[\w]\big)(\bx) - I[F^*\w](\bx) - \delta q(\bx).
\end{aligned}
\end{equation}
This completes the proof.
\end{proof}

We next apply \Cref{lem:prism} to the case $F=F_U$ and $G=G_U$, with the goal of plugging the resulting formula into the identity $c_U(P)=\tau_P(\varphi_U)$ of \Cref{lem:cochain} and cancelling the coboundary term. To do this, we need to show that the coboundary is controlled:

\begin{proposition}\label{prop:smoothing}
The $(d-1)$-cochain $Q_U$ constructed by \Cref{lem:prism} with $F=F_U$ and $G=G_U$ is controlled. In particular,
\begin{equation}\label{eq:cU-smooth}
    c_U(P) = \tau_P\big(I[J_U\,\w]\big).
\end{equation}
\end{proposition}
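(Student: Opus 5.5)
We must show that $Q_U(\by) = -\int_{\Delta^{d-1}\times[0,1]} H_{\by}^*\w$ obeys the bound \eqref{eq:controlled} for $\by\in(\Z^d)^d$. Then \eqref{eq:cU-smooth} follows quickly. By \Cref{lem:prism}, $\varphi_U = \Gg_U^*I[\w] = I[F_U^*\w] + \delta Q_U$. Since $Q_U$ is alternating and controlled, \Cref{lem:pairing} gives $\tau_P(\delta Q_U)=0$. Moreover $F_U^*\w = J_U\,\w$, and $I[J_U\w]$ is controlled because $J_U$ is bounded and compactly supported. Hence $\tau_P(\varphi_U)=\tau_P(I[J_U\w])$ by linearity of $\tau_P$ on controlled cochains, and \Cref{lem:cochain} concludes.

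\textbf{Polynomial bound.} We first bound $|Q_U(\by)|$ by a power of $\lr{\diam\by}$. Both $F_U\circ\sigma_{\by}$ and $\sigma_{\Gg_U(\by)}$ take values in $[0,1]^d$, so $H_{\by}$ does too. Its $t$-derivatives are bounded: $\|\nabla F_U\|_\infty\le C$ gives $|\p_t(F_U\circ\sigma_{\by})|\le C\diam\by$, and $\sigma_{\Gg_U(\by)}$ is affine with vertices in $\{0,1\}^d$. The $s$-derivative is bounded by $2\sqrt d$. Hence $|H_{\by}^*\w|\le C\lr{\diam\by}^{d-1}$ pointwise, and $|Q_U(\by)|\le C\lr{\diam\by}^{d-1}$.

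\textbf{Vanishing far from the boundaries.} This is the main step, which produces the decay $\lr{y_0}^{-d-1}$. I claim $Q_U(\by)=0$ whenever $\Psi_U(y_0) > 2+d\,\diam\by$; the constant may need adjusting. In that case, for each $j$, at least one of two things holds. Either $\dist(y_0,\p U_j) > 1+\diam\by$, or some other $j'$ has a large distance. I need a cleaner criterion. Fix $j$ with $\dist(y_0,\p U_j)>1+\diam\by$. Then the ball $B(y_0,1+\diam\by)$, which contains $\conv\by$ together with its $1$-neighborhood, lies entirely in $U_j$ or entirely in $U_j^c$. So the $j$-th component of $F_U\circ\sigma_{\by}$ and of $\sigma_{\Gg_U(\by)}$ is the same constant (0 or 1). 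Consequently the $j$-th component of $H_{\by}$ is constant, and the pullback $H_{\by}^*\w$ vanishes identically, since $\w = dz_1\wedge\cdots\wedge dz_d$ contains $dz_j$. Thus $Q_U(\by)\neq 0$ forces $\dist(y_0,\p U_j)\le 1+\diam\by$ for \emph{every} $j$. Summing over $j$ gives $\Psi_U(y_0)\le 1+d(1+\diam\by)$. Transversality then yields $|y_0|^c\le C\lr{\diam\by}$ when $|y_0|\ge c^{-1}$.

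\textbf{Combining.} As in the proof of \eqref{eq:AU-bound}, on the support we have $1\le C\lr{\diam\by}^{(d+1)/c}\lr{y_0}^{-d-1}$. Multiplying by the polynomial bound gives
\begin{equation}
|Q_U(\by)|\le C\lr{\diam\by}^{d-1+(d+1)/c}\lr{y_0}^{-d-1}.
\end{equation}
The bounded region $|y_0|\le c^{-1}$ is handled by enlarging $C$. So $Q_U$ is controlled.

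The obstacle I expect is the vanishing step. One must check that a single constant coordinate of $H_{\by}$ suffices to kill the pullback; it does, since every monomial of $\w$ involves $dz_j$. This is exactly why only the sum $\Psi_U$, and not each distance separately, needs to be large.
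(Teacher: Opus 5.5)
Your proof is correct and follows essentially the same route as the paper's. Both arguments bound $|Q_U(\by)|$ polynomially via derivative bounds on $H_{\by}$, show $Q_U(\by)=0$ as soon as one component of $H_{\by}$ is constant, and use transversality to turn $\Psi_U(y_0)\le 1+d(1+\diam\by)$ into decay in $y_0$. The differences are minor. You check controlledness of $I[J_U\,\w]$ directly from the compact support of $J_U$, where the paper gets it as $\varphi_U-\delta Q_U$. You use the ball $B(y_0,1+\diam\by)$ where the paper uses the $1$-neighborhood of $\conv\by$, and you obtain a slightly sharper exponent $d-1$. None of these changes the argument.
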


\begin{proof} Since $\varphi_U=\Gg_U^*I[\w]$ and $F_U^*\w=\det(DF_U)\,\w=J_U\,\w$, \Cref{lem:prism} with $F=F_U$ and $G=G_U$ gives
\begin{equation}\label{eq:smoothing}
    \varphi_U = I[J_U\,\w] + \delta Q_U.
\end{equation}
Recall from the proof of \Cref{lem:prism} that, for $\by\in(\Z^d)^d$,
\begin{equation}
    Q_U(\by)=-\int_{\Delta^{d-1}\times[0,1]}H_{\by}^*\w, \qquad H_{\by}(t,s)=(1-s)\,F_U\big(\sigma_{\by}(t)\big)+s\,\sigma_{\Gg_U(\by)}(t).
\end{equation}

We now show that $Q_U$ is controlled. First, since $H_{\by}$ is Lipschitz with constant at most $C\lr{\diam\by}$, we have $|Q_U(\by)|\le C\lr{\diam\by}^d$. Second, assume that $\dist(\conv\by,\p U_j)\ge1$ for some $j$. The $1$-neighborhood of $\conv\by$ is connected and does not meet $\p U_j$, so $\1_{U_j}$ is constant on it, equal to $\1_{U_j}(y_0)$. Since $\eta$ is supported in $B(0,1)$, it follows that $f_j=\1_{U_j}(y_0)$ on $\conv\by$; see \Cref{fig:smoothing}. Since moreover $(G_U)_j(y_i)=\1_{U_j}(y_i)=\1_{U_j}(y_0)$ and the $t_i$ sum to one, the $j$-th component of $H_{\by}$ is constant:
\begin{equation}
    (H_{\by})_j(t,s) = (1-s)\,\1_{U_j}(y_0) + s\sum_{i=0}^{d-1} t_i\,\1_{U_j}(y_0) = \1_{U_j}(y_0).
\end{equation}
Hence $H_{\by}^*\w=0$ and $Q_U(\by)=0$.

Therefore, if $Q_U(\by)\neq0$, then for each $j$ there exist $p\in\conv\by$ and $z\in\p U_j$ with $|p-z|<1$, so that
\begin{equation}
    \dist(y_0,\p U_j) \leq |y_0-p| + |p-z| \leq \diam\by + 1.
\end{equation}
Summing over $j$ gives $\Psi_U(y_0)\le 1+d(1+\diam\by)$. If moreover $|y_0|\ge c^{-1}$, then transversality yields $|y_0|^c\le C\lr{\diam\by}$, and therefore
\begin{equation}
    |Q_U(\by)| \le C\lr{\diam\by}^d \le C\lr{\diam\by}^{d+\tfrac{d+1}{c}}\lr{y_0}^{-d-1}.
\end{equation}
This bound also holds, after increasing $C$, when $Q_U(\by)=0$ or $|y_0|<c^{-1}$. Hence $Q_U$ is controlled.

\begin{figure}[t]
\centering
\begin{minipage}{0.35\linewidth}
\centering
\includegraphics{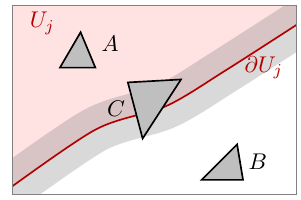}
\end{minipage}\hfill
\begin{minipage}{0.6\linewidth}
\setlength{\captionindent}{0pt}
\caption{The function $f_j=\1_{U_j}*\eta$ coincides with $\1_{U_j}$ outside the $1$-neighborhood of $\p U_j$ (shaded band). On the simplices $A$ and $B$, which avoid it, $f_j$ is constant; hence $H_{\by}^*\w=0$ and $Q_U(\by)=0$. Only simplices such as $C$ can contribute.}
\label{fig:smoothing}
\end{minipage}
\end{figure}

Finally, $I[F_U^*\w]= I[J_U\,\w] = \varphi_U-\delta Q_U$ is controlled, as the difference of two controlled cochains. Therefore \Cref{lem:cochain} and \Cref{lem:pairing} give
\begin{equation}
    c_U(P) = \tau_P(\varphi_U) = \tau_P\big(I[F_U^*\w]\big)+\tau_P(\delta Q_U) = \tau_P\big(I[J_U\,\w]\big).
\end{equation}
This completes the proof. 
\end{proof}

We are now ready for the proof of \Cref{thm:main}.

\begin{proof}[Proof of \Cref{thm:main}] Let $\rho\defeq J_U-\X_U J_H$. By the remarks following \Cref{def:X}, $\rho\in C_0^\infty(\R^d)$, and $\int_{\R^d}\rho = 0$ because $\X_H=1$. By the Poincar\'e lemma for compact supports \cite[Cor.~4.7.1, p.~39]{BT82}, there exists a compactly supported $(d-1)$-form $\beta$ such that
\begin{equation}
    \rho\,\w = d\beta.
\end{equation}
Stokes' formula gives $I[\rho\,\w] = I[d\beta] = \delta I[\beta]$. By an argument similar to that presented in the proof of \Cref{lem:prism}, the $(d-1)$-cochain $I[\beta]$ is alternating. Because the range $\conv\bx$ of a $(d-1)$-dimensional simplex $[\bx]$ has area controlled by $\diam \bx$, and $\beta$ is smooth and compactly supported,
\begin{equation}
    |I[\beta](\bx)| \le  C\lr{\diam\bx}^{d-1}.
\end{equation}

Let $R$ be such that $\supp\beta \subset B(0,R)$. If $I[\beta](\bx)\neq0$, then $\conv\bx$ meets $\supp\beta$: there exists $y\in\conv\bx$ with $|y|\le R$. Since $|y-x_0|\le\diam\bx$, we get $|x_0|\le R+\diam\bx$, hence $\lr{x_0}\le C\lr{\diam\bx}$ and
\begin{equation}
    |I[\beta](\bx)| \le C\lr{\diam\bx}^{d-1} \le C\lr{\diam\bx}^{2d}\lr{x_0}^{-d-1}.
\end{equation}
This bound holds trivially when $I[\beta](\bx)=0$. Hence $I[\beta]$ is controlled. 

Therefore, by \Cref{lem:pairing}:
\begin{equation}
   \tau_P\big(I[\rho\,\w]\big) = \tau_P\big(\delta I[\beta]\big) = 0.
\end{equation}
On the other hand, from $\rho = J_U-\X_U J_H$ and \Cref{prop:smoothing}, we also have that:
\begin{equation}
    \tau_P\big(I[\rho\,\w]\big) = \tau_P\big(I[J_U\,\w]\big) - \X_U\cdot \tau_P\big(I[J_H\,\w]\big) = c_U(P) - \X_U\cdot c_H(P).
\end{equation}
This completes the proof. 
\end{proof}

\end{document}